
\documentclass[10pt,conference,letterpaper]{IEEEtran}
\usepackage[utf8]{inputenc}
\usepackage[T1]{fontenc}
\usepackage{fixltx2e}
\usepackage{graphicx}
\usepackage{longtable}
\usepackage{float}
\usepackage{wrapfig}
\usepackage{soul}
\usepackage{textcomp}
\usepackage{marvosym}
\usepackage{latexsym}
\usepackage{amssymb}
\usepackage{url}
\usepackage[capitalize]{cleveref}
\tolerance=1000
\usepackage{todonotes}
\usepackage{amsmath}
\usepackage{amssymb}
\usepackage[titlenotnumbered,tworuled,vlined]{algorithm2e}
\usepackage{xspace} 

\usepackage{tikz}
\usetikzlibrary{shapes,arrows,shadows,automata}

\usepackage{mathtools}
\DeclarePairedDelimiter\ceil{\lceil}{\rceil}

\usepackage{amsthm}

\newtheorem{lemma}{Lemma}
\newtheorem{corollary}{Corollary}
\newtheorem{theorem}{Theorem}
\newcommand{\PeerCensus}{PeerCensus\xspace}
\newcommand{\event}{\mathbf}%

\DeclareMathOperator{\rank}{rank}
\newcommand{\digest}[1]{\mathcal{D}(#1)}

\newcommand{\Wonline}[1]{\ensuremath{P(#1)}}

\newcommand{\voting}[1]{V_{#1}}
\newcommand{\onlinevoting}[1]{I_{#1}}
\newcommand{\hide}[1]{}

\newenvironment{algo}{%
  \algorithm
}{%
  \endalgorithm
}
\SetAlgoVlined
\DontPrintSemicolon
\SetCommentSty{emph}
\newcommand{\Protocol}[2]{\textbf{Protocol:} #1
\smallskip

#2}
\newcommand{\Specification}[2]{\textbf{Specification:} #1
\smallskip

#2}
\SetKwBlock{Init}{Initialization:}{}
\SetKwProg{OnEvent}{On Event}{:}{}
\SetKwComment{Comment}{$\triangleright$ }{}
\SetKwProg{OnCommit}{On Commit}{:}{}
\SetKwProg{Validate}{Validate}{:}{}
\SetKwBlock{State}{Shared State:}

\title{Bitcoin Meets Strong Consistency}
\author{\IEEEauthorblockN{Christian Decker}\IEEEauthorblockA{ETH Zurich\\cdecker@ethz.ch} \and  \IEEEauthorblockN{Jochen Seidel}\IEEEauthorblockA{ETH Zurich\\seidelj@ethz.ch} \and  \IEEEauthorblockN{Roger Wattenhofer}\IEEEauthorblockA{ETH Zurich\\wattenhofer@ethz.ch}}
\date{\today}

\begin{document}

\maketitle

\begin{abstract}
  The Bitcoin system only provides eventual consistency.
  For everyday life, the time to confirm a Bitcoin transaction is prohibitively slow.
  In this paper we propose a new system, built on the Bitcoin blockchain, which enables strong consistency.
  Our system, \PeerCensus, acts as a certification authority, manages peer identities in a peer-to-peer network, and ultimately enhances Bitcoin and similar systems with strong consistency.
  Our extensive analysis shows that \PeerCensus is in a secure state with high probability.
  We also show how Discoin, a Bitcoin variant that decouples block creation and transaction confirmation, can be built on top of \PeerCensus, enabling real-time payments.
  Unlike Bitcoin, once transactions in Discoin are committed, they stay committed.
\end{abstract}

\section{Introduction}
\label{sec-1}
Since its inception in 2008, the Bitcoin~\cite{nakamoto2008bitcoin} cryptocurrency has been steadily growing in popularity.
Today, Bitcoin has a market capitalization of about 5 billion USD. The Bitcoin network processes transactions worth approximately 60 million USD each day.

So, how usable are Bitcoins in everyday life?
While one certainly can buy a coffee with Bitcoins, a Bitcoin transaction is shockingly insecure when compared to a cash (or credit card) transaction.
Cash is exchanged on the spot with the coffee, and credit card companies are liable for fraud attempts.
Bitcoins are different, as the Bitcoin system only guarantees ``eventual consistency''.
The barista will serve a coffee in exchange for a signed Bitcoin transaction by the customer. 
However, a signed Bitcoin transaction is no guarantee that the Bitcoin transfer really takes place.

In order to get a better understanding, let us follow the path of our Bitcoin transaction.
First, the barista will inject the signed transaction into the Bitcoin network, which is a random-topology peer-to-peer network.
The correctness of the signature will be immediately verified by the peers that get the transaction.
Next, the transaction will be flooded within the Bitcoin network, such that all peers in the Bitcoin network have seen the transaction.
Eventually, the transaction will be included in a block, and finally the block will end up in the~blockchain.

While the problem of fraudulent customers also exists with cash or credit cards, Bitcoins allow fraud on a whole different level. The main issue are so-called double-spend attacks~\cite{bamert2013snack, karame2012two}. Our coffee consumer may simply spend the same money multiple times. In addition to signing the transaction for our barista, the customer may concurrently sign another transactions spending the same Bitcoins but with the customer himself as beneficiary. While the barista is injecting her transaction into the Bitcoin network, the customer is injecting his transaction into the Bitcoin network as well, quickly and with as many peers as possible. Both the original and the double transactions will spread in the Bitcoin network, but the double-spend was injected at multiple vantage points, so it will spread more quickly. A professional fraudulent customer will manage that the double-spend transaction is orders of magnitude more present in the Bitcoin network than the original transaction. As such the double transaction will be much more likely to end up in a block, and ultimately in the blockchain.

The problem is that the barista cannot verify the whole process in real time. While injecting a transaction into the Bitcoin network, and the verification of the signature by the first peer is a matter of seconds, all the other steps in the process take time. Flooding transactions in a network already is an operation which may take minutes, and a block is only generated every 10 minutes~\cite{nakamoto2008bitcoin}. However, with the current backlog,\footnote{\url{https://blockchain.info/unconfirmed-transactions}} it is unlikely that a transaction will be in the next block. Rather, a few blocks might be generated before our transaction (the original or the double) managed to be selected in a block, so for a low-value transaction like the payment of a coffee we can expect a delay of about 30 minutes.
In addition there is the problem of so-called blockchain forks~\cite{decker2013information}, i.e., two conflicting blocks may generated at roughly the same time, and only subsequent blocks will determine which of the blocks is part of the blockchain and which one is discarded.
Each subsequent block takes another 10 minutes, so in order to know that a transaction is confirmed, we may need to wait for several hours.
The Bitcoin system is a prime example of eventual consistency: Eventually Bitcoin has a consistent view of the transactions, but one can never be sure, and it may always happen that a blockchain fork will destroy a substantial amount of transactions, sometimes even multiple hours later~\cite{bip0050}.

Because of this we argue that the current version of Bitcoin is fundamentally flawed when it comes to real time transactions, where goods or services are instantly exchanged for Bitcoins.
How long should our barista wait until she is sure that the transaction will eventually be in the blockchain?
Waiting for more confirmations does reduce the probability of the transaction being reverted, but how safe is safe enough?
When should the seller release the goods or service to the buyer? 
Most vendors are probably unaware of this tradeoff between safety and time.
In order to use Bitcoin for real time exchanges, we need to completely abandon the weak concept of eventual consistency and instead embrace strong consistency.

In this work we propose \emph{\PeerCensus}, a system upon which strongly consistent applications can be built.
The basic idea is that Bitcoin's blockchain can be used to introduce and manage identities that participate in the system.

More precisely, \PeerCensus{} uses the blockchain as a way to limit and certify new identities joining the system.
This yields strong guarantees on the assignment of these identities to entities participating in it.
We stress that \PeerCensus is application agnostic, i.e., it does not manage any application specific information.
A single \PeerCensus{} instance may be shared by an arbitrary number of applications.
In particular \PeerCensus{} can be used to introduce strong consistency in Bitcoin.
For easier readability, we call the strongly consistent Bitcoin that uses \PeerCensus{} \emph{Discoin}.

Discoin does not rely on its own blockchain.
Instead, it can rely on a byzantine agreement protocol~\cite{castro1999practical,kotla2007zyzzyva,lamport1982byzantine} to commit transactions to the transaction history, effectively decoupling block generation from transaction confirmation and thus enabling safe and fast transactions.
Once a transaction is committed it cannot be reverted at any future time, a property we refer to as \emph{forward security}.
This is in contrast to Bitcoin, where confirmations are slow and can be reverted by a sufficiently strong attacker.


Our approach is also significant in light of the recent proliferation of alternative digital currencies, the so-called altcoins, all reliant on their own blockchain.
The creation of altcoins has had the effect of splitting resources among many blockchains, resulting in many smaller and consequently more easily attackable blockchains.
\PeerCensus, with its shared instance, allows the computational resources to be concentrated to a single blockchain, strengthening it against attacks.

Moreover, \PeerCensus enables experimental versions of Bitcoin to test protocol changes at a smaller scale before merging them with the main network.
This is an alternative to the approach of \cite{back2014sidechains}, which instead suggests to allow transactions between otherwise separate blockchains.

The security guarantees of \PeerCensus are extensively analyzed in \Cref{sec:analysis}, where we show that with high probability the system does not fail.
Furthermore, we outline how the current Bitcoin system can be migrated to Discoin running on top of \PeerCensus, gaining strong consistency and real-time payments as a result.
Migrating resources and blocks from Bitcoin allows us to maintain the momentum and the public acceptance Bitcoin has gathered over the years.
Our proposed migration method results in an instance of \PeerCensus that in expectation fails fewer than once every 7~million years.

\section{Overview}
\label{sec:overview}

Our main objective is to enable the creation of a cryptocurrency that provides forward security and supports fast confirmations.
We accomplish this goal by leveraging techniques from Bitcoin as well as byzantine agreement protocols, resulting in strong consistency guarantees.
%
Known agreement protocols are not applicable to a peer-to-peer environment in which Bitcoin operates, for three reasons: Openness, Sybil Attacks, and Churn.

\begin{itemize}
\item
  \emph{Openness}:
  The set of peers eligible to participate in the protocol changes over time, but previous protocols rely on a fixed set of participants.
\item
  \emph{Sybil attacks}:
  Entities may participate in the protocol with an arbitrary number of identities, effectively disrupting voting based agreement protocols.
\item
  \emph{Churn}:
  Peers may join or leave the system at arbitrary times, therefore the quorum size required for agreement cannot be constant.
\end{itemize}


\begin{figure}
 \centering
 \includegraphics[width=0.45\textwidth]{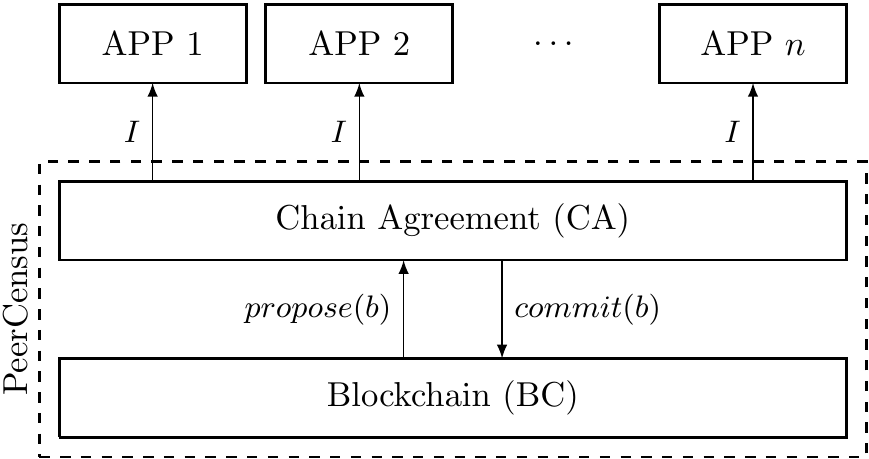}
\caption{The layout of the components and information flows.}
\end{figure}

Typical voting based agreement protocols, like PBFT~\cite{castro1999practical} and Zyzzyva~\cite{kotla2007zyzzyva}, require knowledge of the membership:
Before proceeding, the protocol must determine whether a sufficient number of participants voted.
This requirement is in stark contrast to the openness of a peer-to-peer setting.
Moreover, allowing unrestricted entry of new peers to the system creates the potential of Sybil attacks.
In a Sybil attack, a single entity poses as an arbitrary number of peers (by generating fake identities) and joins the system as distinct participants in order to subvert the system.
While the issue of churn has been addressed by previous agreement protocols (e.g., Secure Group Membership Protocol~\cite{reiter1996securegroup}), to the best of our knowledge Sybil attacks are left unaddressed by traditional agreement protocols.

Bitcoin introduced a novel use of Proof-of-Work systems, namely a \emph{blockchain} data structure, as a mechanism to deal with the problems caused by openness.
But in Bitcoin states can temporarily diverge, since each peer applies incoming operations to its local state without reaching any kind of agreement beforehand.
As a result, Bitcoin only guarantees eventual consistency, a property that is questionable for a protocol that is supposed to handle financial transactions.

In \PeerCensus we combine those approaches to obtain the best of both worlds:
Resilience to Sybil attacks and strong consistency.
Correspondingly, \PeerCensus consists of two core components:
the \emph{Blockchain (BC)} and the \emph{Chain Agreement~(CA)}.

The Blockchain's purpose is to mitigate Sybil attacks.
This is achieved by regulating the rate at which identities gain privileges within the system, and by ensuring that those privileges are not obtained by a single entity.
Peers are either \emph{non-voting} or \emph{voting} peers.
In particular, new peers start as non-voting until promoted to \emph{voting} by appending a block to the collaboratively maintained blockchain.
The rate at which blocks can be found in the network can be regulated so that new identities are promoted at a fixed rate, currently every 10 minutes.
Furthermore, the share of identities an entity may control converges to the share of computational resources it controls in the network.


The Chain Agreement on the other hand augments the system with strong consistency.
By virtue of the voting rights issued from the Blockchain, a byzantine agreement protocol can be used.
The CA's task is twofold.
One task is to track the system membership, i.e., which identities are currently online and participating.
This ensures that a voting based agreement protocol such as PBFT can function correctly.

The other task is to resolve conflicts in case of a blockchain fork, i.e., if multiple blocks are proposed for extending the blockchain, then only one of them will be committed.
Using standard agreement protocol techniques we immediately obtain strong consistency.
\PeerCensus guarantees that with high probability, an entity can only subvert the agreement if it controls a sufficiently large share of all resources.

Applications built on top of \PeerCensus may rely on the guarantees about the identity distribution and the membership.
To demonstrate how simple it is to build strongly consistent applications on top of \PeerCensus, we introduce a new cryptocurrency called Discoin.
Because of the \PeerCensus foundation, Discoin itself can rely on classical byzantine agreement protocols to atomically confirm transactions.
Transactions are proposed to the primary in Discoin, which assigns sequence numbers to them and attempts to commit them to the transaction history.
Since transactions are totally ordered, double-spends can be resolved locally, and upon committing all peers agree on a common transaction history.

Compared to the current Bitcoin system, Discoin and the underlying \PeerCensus system have several advantages:
\begin{itemize}
 \item A small blockchain since blocks only contain a single identity.
 \item Blockchain forks are resolved immediately when they occur.
 \item Confirmations are decoupled from blocks, enabling real-time confirmations.
 \item Since \PeerCensus tracks the participating identities, Discoin can distribute rewards and transaction fees to all participants instead of just the block finder.
\end{itemize}

Ultimately, \PeerCensus not only enables the creation of strongly consistent, but also simpler applications, by abstracting the dynamic membership.

\section{System Model}
\label{sec:model}

The setting in which \PeerCensus operates consists of the following three components:
a) a \emph{peer-to-peer} system,
b) the notion of \emph{controlling entities}, and
c) the notion of computational \emph{resources} at an entity's disposal.
The role of the peer-to-peer system is to execute the \PeerCensus protocol, whereas a controlling entity models an individual, possibly having control over several peers.
A \emph{Proof-of-Work} (PoW) mechanism (see \cref{sec:proof-of-work}) controls the entry rate of peers to the system to mitigate Sybil attacks.
In particular, the amount of PoWs a controlling entity $e$ can generate, and thus the number of peers controlled by $e$ entering the system, is dictated by the amount of (computational) \emph{resources} at $e$'s disposal.

\paragraph{Peers and Identities}

We denote by $P$ the set of peers that may join the network.
The \emph{identities} (IDs) of peers are established using public-key cryptography as follows:
When a peer $p \in P$ joins the network for the first time, $p$ generates a public-/private-keypair.
The identity of peer $p$ is its public key (or a derivative thereof).
We assume that there is no collision among the IDs chosen by the peers---in practice this is ensured by the assumption that obtaining the private key from the public key is computationally infeasible.
We do not require that IDs are ordered, and the outcome of \PeerCensus does not depend on the IDs chosen by the peers.

The system evolves in discrete unit time steps. 
At any given time, a peer $p \in P$ may either be online or offline, and we refer to the set of online peers at time $t$ by $\Wonline{t} \subseteq P$.
Offline peers may \emph{join} the network at arbitrary times, whereas online peers may \emph{leave} the network by either halting (voluntarily) or crashing (involuntarily) at any time.

Peers communicate via message passing in a point-to-point network.
This could either be viewed as having a completely connected communication graph, or by relaying messages among participants.
We simply assume that between any two online peers there is a channel which eventually delivers all messages.
The authenticity of every message is ensured by signing it with the sender's private key.


\paragraph{Controlling Entities}

The notion of collusion and control sharing among multiple peers is formalized by introducing \emph{controlling entities}.
Each peer $p$ is assigned to exactly one entity $e$ which controls its behavior.
The goal of $e$ is to steer $p$, hoping to maximize the entity's utility, i.e., entities are \emph{selfish}.

\paragraph{Resources}

In order to model computational limitations of entities, we introduce the notion of a \emph{computational unit-resource}, or \emph{resource} for short.
The set of unit-resources that will ever participate in the system is denoted by $R$, and $R(t) \subseteq R$ is the set of active resources at time $t$.
Every resource in $R$ is associated to exactly one entity which owns it.
All unit-resources are thought to possess the same computational power, and the more resources are active for an entity, the more computational tasks can be solved by that~entity.

Similarly to peers, resources may exit the system voluntarily or because of failure.
We assume that the failure and recovery probabilities of unit-resources are independent from their assignment to an entity.

\section{Dynamic Membership Protocol}
\label{sec-3}
\label{sec:protocol}


In this section we present the \PeerCensus{} protocol which provides a trustless decentralized certification authority for identities.
The \PeerCensus protocol consists of three layers, namely
\begin{itemize}
\item the Blockchain (BC) layer,
\item the Chain Agreement (CA) layer, and
\item the Application (APP) layer.
\end{itemize}

We now turn to describing each layer separately, starting with the Blockchain, which is based on a Proof-of-Work mechanism.

\hide{
  The fundamental layers are the BC and the CA layer, and the role of those two is to provide a resilient consensus abstraction to the AC layer.
  The BC consists of a bare bones version of the blockchain found in Bitcoin.
  Is sole purpose is to manage suffrage in the system.
  The CA's task is to track the blockchain status, break ties in case of a blockchain fork, and provide the interface for the AC layer.
  Using the primitives implemented by the BC and CA layers, applications like Bitcoin can be implemented on the AC layer.
  Those applications now receive the benefit of strong consistency properties.
}

\subsection{Blockchain (BC)}
\label{sec-3-2}
\label{sec:blockchain}



\paragraph*{Proof-of-Work Mechanisms}
\label{sec:proof-of-work}

An integral tool used in the Blockchain protocol is a so called \emph{Proof-of-Work} (PoW) mechanism.
This concept was introduced by Dwork and Naor in~\cite{dwork1992pricing}---we only give a brief overview in this subsection.
The key insight behind PoW mechanisms is that that the resources needed to solve computational puzzles are not easily acquired and may not be scaled at will.

A function $\mathcal F(d,c,x) \to \{\text{true, false}\}$, where $d$ is a positive number, and $c$ and $x$ are bit-strings, is called a \emph{PoW function} if it has following properties:

\begin{enumerate}
\item \label{item:pow-check} $\mathcal F(d,c,x)$ is fast to compute if $d, c$, and $x$ are given, and
\item \label{item:pow-work} for fixed parameters $d$ and $c$, finding $x$ so that $\mathcal F(d,c,x) = \text{true}$ using a unit-resource is distributed with $\exp(1/d)$, i.e., computationally difficult but feasible.
\end{enumerate}

We refer to the parameters $d, c$, and $x$ as \emph{difficulty}, \emph{challenge}, and \emph{nonce}, respectively.
For example, $\mathcal F$ might return true if and only if the output of some cryptographic hash function to the concatenation $x|c$ starts with at least $d$ zeroes.

The PoW mechanism issues a difficulty and a challenge pair $(d,c)$.
A nonce $x$ for which $\mathcal F(d,c,x) = \text{true}$ is called a \emph{Proof-of-Work} for $(d,c)$.
In our model, computational resources are required to find such an $x$.
We assume that no entity has an unfair advantage in finding a PoW.
Furthermore, we expect the PoW mechanism to automatically adjust the difficulty%
\footnote{The PoW mechanism used by Bitcoin accomplishes this (cf.~\cite{nakamoto2008bitcoin}).}
between consecutive $(d,c)$ pairs so that the expected time for \emph{any} resource to find a PoW for $(d,c)$ is some constant~$\tau$.

\paragraph*{The Blockchain Protocol}
\label{sec:blockchain-protocol}

The blockchain is a collaboratively maintained list whose function is to throttle joins of new identities to the CA protocol by employing a PoW mechanism.
A single \emph{block} in the blockchain has the form
\newcommand{\Block}[1][]{\ensuremath{\langle h_{#1}, d_{#1}, p_{#1}, x_{#1} \rangle}}
\[
b = \Block,
\]
where $h$ is a hash value, $d$ is a difficulty, $p \in P$ is a peer, and $x$ is a bit-string.
We denote by $\mathcal H$ the hash function used to calculate $h$.
A \emph{blockchain} consists of a sequence $C = (b_1, \dots, b_l)$ of blocks, and a \emph{genesis block} $b_0$ that is fixed in advance.
From here on, we assume the system implementation provides an agreed-upon genesis block.

For $i \geq 1$, block $b_i = \Block$ is said to be \emph{legal} if
\[
\begin{array}{rcl}
  h                                      & = & \mathcal H(b_{i-1}), \text{ and } \\
  \mathcal F(d, \langle h, p \rangle, x) & = & \text{true},
\end{array}
\]
that is, if the hash in $b_i$ is obtained from $b_{i-1}$, and $b_i$ is a Proof-of-Work.
For a legal block $b_i$, the block $b_{i-1}$ is called the \emph{parent} of $b_i$, and $b_i$ is a \emph{child} of $b_{i-1}$.
A blockchain is \emph{legal} if every non-genesis block is legal.

\SetKwFunction{mine}{mine}
\SetKwFunction{proposeBlock}{propose\_block}
Since the blockhain is based on a PoW mechanism it is ensured that new blocks cannot be appended to $C$ at will.
Attempting to find a legal block that extends the current blockchain is called \emph{mining}.
We encapsulate this process in the procedure $\mine(b)$, which for peer $p$ attempts to find a block with parent $b$ that includes $p$'s identity.

%

Note that legal blocks together with $b_0$ form a tree rooted at $b_0$ due to the parent/child relation, and a legal blockchain corresponds to a path in the tree starting at the root.
In order to provide forward security, it is necessary that once the peers agree on a blockchain $C$, they will never accept a blockchain that does not have $C$ as a prefix.
To tackle this issue, whenever the blockchain is extended the CA protocol is used to ensure that all peers agree on the same extended blockchain.
In particular, the BC protocol relies on the \proposeBlock operation provided by the Chain Agreement.

If the Chain Agreement protocol accepts the block proposed by peer $p$, then the identity of $p$ becomes voting.
In that case the resources allocated to $p$'s mining process by the controlling entity of $p$ may be assigned to a new identity.
If on the other hand a block containing a different peer is accepted, then $p$ continues mining and proposes the next block it finds.
Refer to \cref{fig:bc-algo} for a pseudo-code description of the BC protocol.

\begin{figure}[h]
  \begin{algo}
    \Protocol{Blockchain, from the perspective of peer $p$}{
      \Init{
        $C \gets $ the current Blockchain, obtained from CA\;
        trigger Start event \;
      }
      \OnEvent{Start}{
        $b \gets $ the newest block in $C$ \;
        $\mine(b)$ \;
      }

      \OnEvent{\mine($b$) returns block $b^\ast$}{
        $\proposeBlock(b^\ast)$ using CA \;
      }

      \OnEvent{CA commits a block $a$}{
        stop mining \;
        $C \gets$ the new blockchain from CA \;
        \If{$a \not= b^\ast$}{
          trigger Start event \;
        }
      }
    }
  \end{algo}
  \caption{The Blockchain Protocol.}
  \label{fig:bc-algo}
\end{figure}

\hide{
  The hash of the genesis block is a system parameter fixed when the protocol is first instantiated.
  Each block $b$ has a \emph{height} $h_b$ which is defined as the distance between the block and the genesis block.
  Hence the genesis block has height $h_g=0$.
  A miner explicitly choses on which block to build by passing it as parameter $b_{h-1}$ in definition \ref{eq:block-definition}.

  The Blockchain $\mathcal{C}$ is defined as the longest path starting from the genesis block.
  It is delimited by the genesis block and the block with the longest distance to the genesis block, called the \emph{blockchain head}.
  Only identities contained in blocks that are part of the blockchain are promoted to voting.
}

\hide{
  Eventually a $p$ attempting to mine a block $b$ will find a legal block containing the public key of the identity $id(p)$.
  The block is then serialized and broadcast throughout the network.
  Peers receiving the $b$ verify that it constitutes a Proof-of-Work by evaluating $\mathcal{F}(b, d)$ and that $b$ extends the blockchain by referring to the current blockchain head.
  If the block is indeed legal, the identities use the Chain Agreement to reach an agreement on whether to accept the block.
  If the Chain Agreement decides that the block should be committed, the identity $id(p)$ is extracted from the block and will henceforth be considered a voting identity. Should the block be committed at time $t$ then
  \begin{align*}
    \voting{t} &= \voting{t-1} \cup \{id(p)\}.
  \end{align*}
}

\subsection{Chain Agreement (CA)}
\label{sec-3-3}




While the blockchain introduces new identities into the system, the Chain Agreement tracks the membership of currently participating identities in the system.
For our CA protocol we adapt SGMP~\cite{reiter1996securegroup} and the PBFT~\cite{castro1999practical} agreement protocols.
In particular, the goal is to keep track of some \emph{shared state} that can be modified by certain predetermined \emph{operations}.
In our case, the shared state encompasses an \emph{operation log} $O$, a set of \emph{online voters} $I$, and the blockchain $C$.

As in SGMP and PBFT, the life cycle of an operation $op$ begins with $op$'s \emph{proposal}.
The proposal is sent to the \emph{primary}, i.e., to a specific peer determined by an agreed-upon scheme.
Given that $op$ is valid and the peers decide to commit it, $op$ is applied to the shared state.
Both agreement protocols rely on the notion of totally ordered \emph{logical time stamps}, and in each such time step exactly one operation is committed.
A \emph{logical time stamp} is a triple $(\ell, v, s)$, where $\ell$ is the current length of $C$ (i.e., the blockchain contained in the shared state), and $v$ and $s$ are positive integers referred to as the \emph{view primary number} and \emph{sequence number}, respectively.
Logical time stamps are ordered in lexicographic order.

To determine the primary we introduce the notion of a peer's rank.
For a fixed blockchain $C = (b_1, \dots, b_\ell)$ and a voting peer $p$ let $i$ denote the index of the block in which $p$ appears.
The \emph{rank of $p$}, denoted by $\rank(C, p)$, is $\ell - i$, i.e., peers are ranked by how recently the right to vote was obtained.
Note that the rank is well defined since a peer can acquire the right to vote only once.

Consider a time stamp $(\ell, v, s)$ and the associated blockchain $C$ of length $\ell$.
The peer $p$ with $\rank(C,p) = v \pmod \ell$ is chosen as the primary, i.e., the peer who accepts operation proposals for the next time step.
We use the failover mechanism of PBFT to ensure that $v$ is increased without the help of a primary in case the current primary fails.

Using the logical time stamps and the rank as fixed above, the underlying SGMP/PBFT agreement protocols can be used to implement Chain Agreement.
Note however that due to churn, just like SGMP, CA cannot support a snapshot mechanism.
This is in contrast to PBFT where the set of participating peers is fixed in advance and snapshots are supported.



\hide{
  Peers and their associated identities may leave the system at any time, either voluntarily or due to crash failures, and they may recover at any time rejoining the system.

  

}

\paragraph*{Operations}

The Chain Agreement uses a standard byzantine agreement technique, in which each operation has to go through the stages propose, pre-prepare, prepare, and commit before it is applied.
More specifically, operations are initially proposed to the current primary $q$.
The task of $q$ is to assign consecutive time stamps to proposed operations.
For each proposal, $q$ then sends out pre-prepare messages, receives prepare messages, and commits the operation once $q$ received a sufficient amount of prepare messages from peers in $I$.
Recall that in each step, authenticity of messages is guaranteed due to signatures offered by the public key cryptography system.

What is left in the Chain Agreement specification are the operations mutating the shared state.
The Chain Agreement protocol relies on the following three operations:

\SetKwFunction{block}{block}
\SetKwFunction{join}{join}
\SetKwFunction{leave}{leave}

\begin{itemize}
 \item $\block(b)$
   is used to append a new block $b$ to the Blockchain, thus promoting the peer contained in $b$ to be promoted to voting.
 \item $\join(p)$
   is used by a previously offline voting peer $p$ to re-join the set $I$ of online voters.
 \item $\leave(p)$
   is used to remove offline peers from $I$.
\end{itemize}

We need to explicate two aspects of each operation, namely how the operation \emph{validated}, and how \emph{committing} it affects the shared state.
Validation occurs at the primary when an operation is proposed, and at other nodes upon receiving a pre-prepare message for that operation.
This is to ensure that a faulty/malicious user cannot modify the shared state in an undesired manner.
Whenever an operation is committed, peers append the operation together with its assigned time stamp and collected commit signatures to the operation log and update their new time stamp accordingly.
Furthermore, committing an operation may modify the shared state according to the operation's purpose.
We now describe both aspects for each operation separately and refer to \cref{fig:ca-protocol} for a pseudo-code description.

\begin{figure}[t]
  \input{ca-algo.tex}
  \caption{Operations of the Chain Agreement Protocol}
\label{fig:ca-protocol}
\end{figure}


Recall that proposals for a block $b$ are sent to the Chain Agreement only from the Blockchain layer.
To validate a $\block(b)$ operation, all peers check that $b$ is indeed valid and extends the current blockchain $C$.
To commit this operation $b$ is appended to $C$, and the time stamp is set to $(\ell, 0, 0)$, where $\ell$ is the new blockchain length.
This results in the block finder becoming the new primary, with the previous primary as backup.


A join operation consists of the joining peer $p$.
To validate a join, peers check whether $p$ is indeed reachable over the network.
In that case, the operation will be committed and $p$ is included in the set $I$.


Peers rely on a failure detector to detect when identities left the system, e.g., by sending \emph{ping} messages in regular intervals.
Should one peer detect a failure of another peer $p$, a leave operation on behalf of $p$ will be emitted.
A $\leave(p)$ operation is validated by checking whether $p$ indeed failed, to keep malicious peers from removing online peers.
When the operation turns out to be valid, it is committed by removing $p$ from $I$.

\hide{

Should peers detect that the primary has failed, they increment the primary view number $v$ in the timestamp and switch over to the next primary with the corresponding rank.


Peers maintain a log $O_t$ of operations, with the corresponding signed commit messages, that led to the current state of $\voting{t}$ and $\onlinevoting{t}$.
The log is later used to bootstrap newly joining peers, which verify the validity of the operations and incrementally apply them to their local state.
}

\hide{
\paragraph*{Consensus protocol}
We assume $f = \lfloor \onlinevoting{t} / 3 \rfloor$ as an upper bound to the number of identities controlled by a single entity.
Upon receiving a block operation $o$ the primary increments the sequence number $s$ in the timestamp $t$ and broadcasts a \emph{pre-prepares} message $\langle \text{PP}, o, t, \digest{\onlinevoting{t}}\rangle$ to all processes in the system.
Notice that $\voting{t}$ in the message is the membership that results when applying $o$ to the previousl membership.
In the case of a join or leave operation the primary first collects $f+1$ operations $S$ implicating the same identity but signed by distinct identities before incrementing the timestamp and broadcasting the pre-prepare message $\langle \text{PP}, S, t , \digest{\onlinevoting{t}}\rangle$.

A peer receiving a pre-prepare message for operation $o$ verifies the validity of the message.
The pre-prepare message is valid if it is signed by the current primary, the timestamp matches the last timestamp, except for $s$ being incremented, the operation $o$ is valid and $\digest{\onlinevoting{t}}$ matches the locally computed digest after tentatively applying $o$ to the membership.
If the message is valid the identity broadcasts a \emph{prepare} signed message $\langle\text{P}, o, t, \digest{\onlinevoting{t}} \rangle$ to all other processes.

After sending the prepare message an identity waits for prepare messages from its peers.
Upon receiving $2f+1$ prepare messages from distinct identities, it broadcasts a \emph{commit} message $\langle\text{C}, o, t, \digest{\onlinevoting{t}} \rangle$.
Upon receiving $2f+1$ commit messages from distinct identities it commits the operation to its operation log and applies the operation to its membership view.

A process suspects that the primary is faulty if it did not send a pre-prepare message for a proposed operation $o$ before a timeout triggers.
A process $k$ suspecting the primary stops accepting pre-prepare messages from the current primary and broadcasts a \emph{suspicion} message $\langle \text{S}, o, t\rangle_k$ to all processes.
Additionally it sends a \emph{view-change} message $\langle\text{VC}, t, \digest{\onlinevoting{t}}\rangle_k$ to the next primary, i.e., the process with identity $i_d$ with $rank(\mathcal{C}, i_d) = v + 1 \bmod |\mathcal{C}|$, where $v$ is the view number of the current timestamp.
Correct processes receiving a suspicion message will attempt to propose the operation themselves, also sending suspicion and view-change messages should the primary still not process the operation.

Eventually the next primary will have received $2f+1$ view-change messages $S$, electing it the new primary.
The new primary then broadcasts a \emph{new-view} message $\langle\text{NV}, t, \digest{\onlinevoting{max\{t\}}}, S\rangle$, with an incremented view number $v$ in the timestamp and the sequence number reset to $0$.
Including $\digest{\onlinevoting{max\{t\}}}$ allows processes to synchronize before processing new operations under the new primary.

In addition to the primary changes due to the primary being faulty we introduce forced primary changes as a result of blocks being found.
Block operations are proposed by the Blockchain to the current primary, which will execute the commit protocol as described above.
After the block operation is committed, the processes initiate a view change, making the process with the identity that was confirmed in the block the new primary.
The processes send a view-change message to the new primary which will take over coordination of the protocol as soon as it received at least $2+1$ view-change messages from distinct identities.
The new primary sends a new-view message with a timestamp $t=(b_h, 0, 0)$ where $b_h$ is the height of the block that has just been committed and is thus larger than the height in the previous timestamp.

The signed commit messages are also shared with the Blockchain in order to resolve eventual inconsistencies due to multiple blocks being found near simultaneously, i.e., blockchain forks.
}

\subsection{Application}
\begin{figure}
 \includegraphics{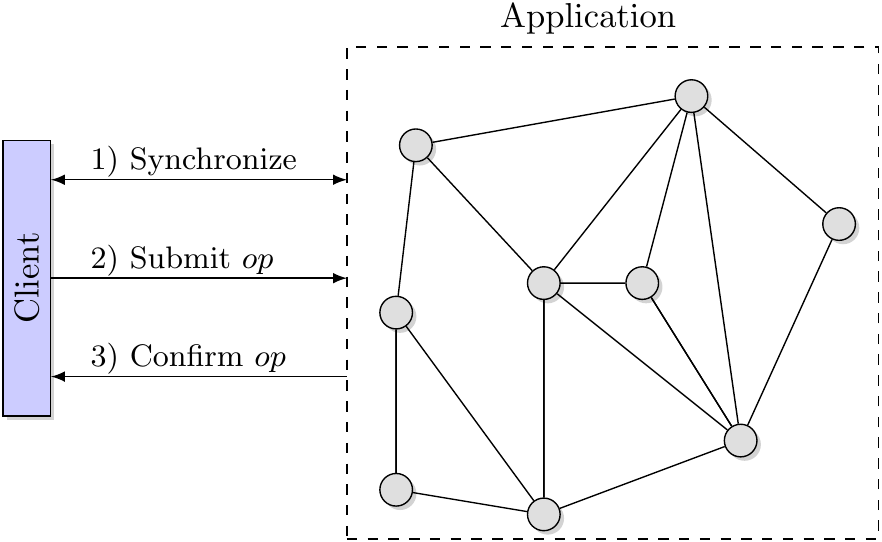}
 \caption{Application client communicating with the application.}
 \label{fig:client-service}
\end{figure}

The application layer makes use of the membership information from the CA in order to implement the application logic.
The CA provides a ranking among identities, the current membership as well as its timestamp, which enables the application to use the full capabilities of PBFT.
This includes the use of snapshots of the application state.

The application has some shared state and deterministic operations that modify the state.
Operations are totally ordered by assigning a timestamp $(t,o)$ to them, where $t$ is the membership timestamp from the CA and $o$ is an \emph{operation sequence number} assigned by the current primary.

The application logic and state is encapsulated in the application layer and does not influence the decisions in the CA.
A single instance of the CA and the BC can therefore be shared among any number of applications.

Applications may export functionality to clients that are not participating in the application agreement, like depicted in \Cref{fig:client-service}.
Clients synchronize with the CA in order to get the membership information.
The synchronization consists of downloading the CA operation and incrementally applying it to the membership.
The clients then submit operations to the application, which in turn processes them.
Using the membership information, the clients then verify the confirmation that the operation was processed correctly.

\hide{
\subsection{Protocol Discussion}
Compared to SGMP, the \PeerCensus protocol features a dynamic set of voting identities.
The change of the set of voting identities 
The primary change upon a block being committed is due to the ranking and set of voting identities changing.

Since only blocks in the blockchain are promoted, peers will only extend the longest chain.
By committing successor blocks in the CA the chain is guaranteed not to change in the future:
Even if an attacker eventually controls a majority of identities, it will not be able to forge the votes to replace a successor block, hence the blockchain is forward secure.


For simplicity entities create a new identity for each block they attempt to find.
As an optimization it is also possible to assign weights to the identities, which are incremented each time the corresponding identity is included in a block.
The protocol would then switch to a weighted voting.
}

\section{Safety \& Liveness}
\label{sec:analysis}

We would like to lift the safety and liveness guarantees provided by PBFT~\cite{castro1999correctness} and apply them to our Chain Agreement.
An agreement protocol provides \emph{safety} if operations on the shared state are committed atomically, i.e., as if they were applied on a single sequential machine;
An agreement protocol provides \emph{liveness} if all proposed valid operations are eventually committed.
The premise under which PBFT provides both is that less than one third of the participants are not faulty.


In our setting participants in the \emph{protocol} are modeled as peers, whereas participants in the \emph{system}, i.e., a individuals with an agenda to subvert the protocol, are modeled as \emph{entities}.
In order to lift the guarantees from PBFT to Chain Agreement, we need to ensure that at any time $t$, less than one third of the online voters (the set $I$ in the CA) are controlled by a single entity.
Since SGMP ensures that $I$ tracks the voters in $P(t)$ (with some delay depending on the message delays and failure detector speeds, cf.~\cite{reiter1996securegroup}), it is sufficient to investigate how $P(t)$, and in particular the voters therein, evolves over~time.

To state this formally, let $A$ be a malicious entity referred to as \emph{attacker}.
To simplify the analysis, we denote by $D$ a meta-entity that encompasses all entities that are not $A$.
For some fixed point in time, let $I$ be the set of online voters.
We denote by $I_A$, and $I_D$ the corresponding partition of $I$ into online peers controlled $A$, and $D$, respectively.
We can apply the classic positive results for byzantine agreement due to Lamport~\cite{lamport1980reaching} if it holds that $|I_A|/|I| < 1/3$.
This is equivalent to ensuring that
\[
\phi_I := \frac{|I_A|}{|I_D|} < 1/2\,.
\]

Therefore, as long as the inequality remains satisfied we say that \PeerCensus is in a \emph{secure state}.
On the other hand, Lamport's work also established that no guarantees can be made should the inequality be exceeded.
Correspondingly, when the inequality is violated we say that \PeerCensus is in an \emph{insecure state}.

What are the consequences of being in an insecure state?
First observe that $A$ can cement its control by not committing block or join operations, thus hindering peers controlled by other entities from being included the online voter set.
The effect for the application layer is that new operations are only applied at $A$'s will.
Note however, that past committed operations cannot be modified or undone by any attack on the protocol, i.e., strong consistency up to the time when $A$ took control is still guaranteed.

Our analysis relies on the system being in its \emph{steady state}, i.e., that the number of online peers and resources is governed by the respective expected value.
This is the case if \PeerCensus was active for a sufficiently long time.
Later in \cref{sec:reach-stable-state} we show that this assumption is justified due to a bootstrapping method.
Before describing the procedure in detail, we now turn to establishing our following main theorem.

\begin{theorem}
  \label{theorem:main-theorem}
  Let $\phi_R$ denote the fraction of resources associated with $A$ over resources not associated with $A$, and let $0 < \epsilon < 1/2$ be a constant.
  If \PeerCensus reaches a steady state and $\phi_R < 1/2 - \epsilon$, then \PeerCensus is in a secure state with high probability.
\end{theorem}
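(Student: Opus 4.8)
The plan is to track the evolution of the ratio $\phi_I = |I_A|/|I_D|$ over time, and to reduce bounding it to a statement about how identities are allocated to $A$ versus $D$ as blocks are committed. The key observation is that, by the properties of the PoW mechanism, each new voting identity is awarded to the entity that first finds a Proof-of-Work for the current challenge; since finding a PoW with a unit-resource is $\exp(1/d)$-distributed and no entity has an unfair advantage, the entity that finds the next block is $A$ with probability $\frac{|R_A|}{|R_A|+|R_D|} = \frac{\phi_R}{1+\phi_R}$, independently across blocks (in steady state, where the active resource counts are governed by their expectations). Thus the total number of identities ever granted to $A$ after $n$ blocks is a sum of $n$ independent Bernoulli trials with success probability $p_R := \phi_R/(1+\phi_R) < (1/2-\epsilon)/(3/2-\epsilon) < 1/3$, and likewise the count for $D$ concentrates around $(1-p_R)n$.

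Next I would pass from ``identities ever granted'' to ``online voters.'' Here I would invoke the model assumption that failure and recovery probabilities of resources (and, correspondingly, of the peers/identities they drive) are independent of which entity owns them: each voting identity is online at the fixed reference time independently with some common probability $q$, so $|I_A|$ is (conditionally on the granted counts) Binomial$(N_A, q)$ and $|I_D|$ is Binomial$(N_D, q)$ with $N_A + N_D = n$. A Chernoff bound on each of $N_A$, $|I_A|$, $N_D$, $|I_D|$ then shows that with probability $1 - e^{-\Omega(n)}$ we have $|I_A| \le (p_R + \delta)\,qn$ and $|I_D| \ge (1-p_R-\delta)\,qn$ for a suitably small $\delta = \delta(\epsilon)$, whence
\[
\phi_I \;=\; \frac{|I_A|}{|I_D|} \;\le\; \frac{p_R+\delta}{1-p_R-\delta} \;<\; \frac12 ,
\]
the last inequality holding because $p_R < 1/3$ leaves a constant gap that $\delta$ can be chosen to preserve. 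Since $n$, the number of committed blocks, grows without bound once \PeerCensus has run long enough to reach steady state, $e^{-\Omega(n)}$ is negligible, giving the ``with high probability'' conclusion; finally, as noted in the text, $\phi_I < 1/2$ is exactly the condition under which SGMP/PBFT lift to Chain Agreement, so \PeerCensus is in a secure state.

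The main obstacle I anticipate is making the ``steady state'' reduction rigorous: the counts $|R_A(t)|$ and $|R_D(t)|$ are themselves random and time-varying, so the per-block success probability is not literally the constant $p_R$ but a fluctuating quantity concentrated around it, and $A$ may adaptively withhold commits, choose when to bring resources online, or spawn identities strategically. I would handle this by conditioning on the steady-state event (active resource counts within a $(1\pm o(1))$ factor of their expectations, justified later in \cref{sec:reach-stable-state}), by appealing to the difficulty-retargeting assumption so that the expected inter-block time is the constant $\tau$ regardless of the active resource pool, and by noting that withholding commits only freezes the identity distribution rather than improving $A$'s ratio — so the worst case for us is the honest-progress case already analyzed. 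A secondary subtlety is the gap between $P(t)$'s voters and the CA set $I$: SGMP guarantees $I$ tracks the online voters up to bounded delay, and since the bound on $\phi_I$ is a strict inequality with a constant slack, a bounded lag cannot violate it once $n$ is large.
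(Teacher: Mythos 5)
Your proposal is correct and follows essentially the same route as the paper: the same three-factor decomposition into resource churn, the block-finding lottery (``miner's luck''), and voter online/offline churn, each controlled by a Chernoff bound and combined via a union bound, with the condition $\phi_I < 1/2$ as the target. The only differences are cosmetic --- you condition on the steady-state resource event rather than folding its deviation probability into the union bound, and you carry the threshold arithmetic ($p_R < 1/3 \Rightarrow \phi_I < 1/2$) through explicitly, which the paper's combination step leaves implicit.
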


To prove \cref{theorem:main-theorem} we separately consider the three factors that influence the cardinalities of $I_A$ and $I_D$, namely membership churn, resource churn and miner's luck.

\begin{itemize}
\item
  \emph{Resource churn}:
  Resources fail and recover, thus limiting or enhancing the attacker's capability to introduce new peers to the voter set.
\item
  \emph{Membership churn}:
  Voting peers fail and recover, directly affecting $I_A$ as well as $I_D$.
\item
  \emph{Miner's luck}:
  A stochastic block mining process determines who gets to introduce a new peer to the voter set.
  With non-zero probability, an attacker's resources may mine more blocks than expected, thus increasing $P_A$ disproportionately.
\end{itemize}

\subsection{Preliminaries}
\label{sec:bound-churn}

In the steady state, resource churn is characterized by a parameter $\rho$ in the following way.
The state of an individual resource is modeled as a two-state Markov-Chain with the transition matrix
\[
\left(
  \begin{array}{cc}
    1-p & p \\
    q & 1-q
  \end{array}
\right)\,,
\]
where $p$ and $q$ denote the probability of a resource to fail or recover, respectively.
The two states indicate whether the resource is currently active, or inactive.
For a single resource, the stationary distribution is $(\rho, 1-\rho)$, where $\rho = q/(p+q)$.
We conclude that in the steady state the expected number of online resources is $\rho |R|$, since resources fail or recover independently from one another.

\begin{lemma}
  \label{lemma:resource-probability}
  Let $\phi_R$ be the random variable representing the ratio of online resources for $A$ to online resources for $D$.
  In the steady state and for $\alpha \in (0,1/2)$ it holds that
  \begin{alignat*}{2}
    \Pr\left[\phi_R \geq \left(1+\frac{2\alpha}{1-\alpha}\right) r\right]
    & <
    &
    & \left(\frac{\exp(\alpha)}{(1+\alpha)^{1+\alpha}}\right)^{\rho nr/(1+r)}
    \\
    &
    & +
    &
    \left(\frac{\exp(-\alpha)}{(1-\alpha)^{1-\alpha}}\right)^{\rho n / (1+r)}\,,
  \end{alignat*}
  where $n$ is the cardinality of $R$, and $r$ is the ratio of $A$'s resources to $D$'s resources in $R$.
\end{lemma}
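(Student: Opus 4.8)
The plan is to express the event $\{\phi_R \ge (1+\tfrac{2\alpha}{1-\alpha})r\}$ in terms of two separate deviation events for the online-resource counts of $A$ and $D$, and then apply multiplicative Chernoff bounds to each. Write $n = |R|$ and let the number of resources owned by $A$ be $n_A$ and by $D$ be $n_D$, so $r = n_A/n_D$ and $n_A = \tfrac{r}{1+r}n$, $n_D = \tfrac{1}{1+r}n$. Since each resource is independently online with probability $\rho$ in the steady state, the number of online $A$-resources $X_A$ is a sum of $n_A$ i.i.d.\ Bernoulli$(\rho)$ variables with mean $\mu_A = \rho n_A = \rho n r/(1+r)$, and likewise $X_D$ has mean $\mu_D = \rho n_D = \rho n/(1+r)$. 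Note $\phi_R = X_A/X_D$.

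The key step is the deterministic observation that if $X_A/X_D \ge (1+\tfrac{2\alpha}{1-\alpha})r$ then at least one of the following holds: $X_A \ge (1+\alpha)\mu_A$, or $X_D \le (1-\alpha)\mu_D$. To see this, suppose both fail, i.e.\ $X_A < (1+\alpha)\mu_A$ and $X_D > (1-\alpha)\mu_D$. Then
\[
\frac{X_A}{X_D} < \frac{(1+\alpha)\mu_A}{(1-\alpha)\mu_D} = \frac{1+\alpha}{1-\alpha}\cdot r,
\]
using $\mu_A/\mu_D = n_A/n_D = r$. Since $\frac{1+\alpha}{1-\alpha} = 1 + \frac{2\alpha}{1-\alpha}$, this contradicts the assumed lower bound on $X_A/X_D$. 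Hence the union bound gives
\[
\Pr\!\left[\phi_R \ge \Bigl(1+\tfrac{2\alpha}{1-\alpha}\Bigr)r\right] \le \Pr[X_A \ge (1+\alpha)\mu_A] + \Pr[X_D \le (1-\alpha)\mu_D].
\]

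It then remains to bound the two terms by the standard multiplicative Chernoff inequalities: for the upper tail, $\Pr[X_A \ge (1+\alpha)\mu_A] \le \bigl(e^{\alpha}/(1+\alpha)^{1+\alpha}\bigr)^{\mu_A}$, and for the lower tail, $\Pr[X_D \le (1-\alpha)\mu_D] \le \bigl(e^{-\alpha}/(1-\alpha)^{1-\alpha}\bigr)^{\mu_D}$. Substituting $\mu_A = \rho n r/(1+r)$ and $\mu_D = \rho n/(1+r)$ yields exactly the claimed bound. The restriction $\alpha \in (0,1/2)$ is exactly what is needed for the lower-tail form to be valid and nontrivial (and for $1-\alpha > 0$ in the ratio manipulation). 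I do not expect any real obstacle here; the only thing to be careful about is the strict-versus-nonstrict inequality bookkeeping in the deterministic step and making sure the independence assumption (resources fail/recover independently, stated in the model and reiterated before the lemma) is invoked to justify treating $X_A, X_D$ as independent binomial sums so the two Chernoff bounds can be summed.
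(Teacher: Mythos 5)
Your proposal is correct and follows essentially the same route as the paper's proof: partition $R$ into $A$'s and $D$'s resources, write $\phi_R$ as the ratio of two independent binomial sums, reduce the ratio-deviation event to the union of an upper-tail event for $X_A$ and a lower-tail event for $X_D$, and apply the two multiplicative Chernoff bounds with the substitutions $|R_A| = nr/(1+r)$ and $|R_D| = n/(1+r)$. The only cosmetic difference is that the paper introduces separate deviation parameters $\beta+\gamma = 2\alpha$ before setting them equal, whereas you fix both to $\alpha$ at the outset and derive the $\bigl(1+\tfrac{2\alpha}{1-\alpha}\bigr)$ factor directly, which is if anything a cleaner justification of the deterministic step.
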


\begin{proof}
  Denote by $R_A \dot\cup R_D = R$ the partition of $R$ into resources belonging to the attacker $A$ and defender $D$.
  For $i \in R_A$, let $X_i$ be the 0/1 random variable indicating whether resource $i$ is online.
  Correspondingly for $j \in R_D$, let $Y_j$ be the 0/1 random variable indicating whether resource $j$ is online.
  Let $X$ and $Y$ be the corresponding random variables denoting the sum of $X_i$ and $Y_j$.
  Note that in the stationary distribution, the expected value of $X$ and $Y$ are $\rho |R_A|$ and $\rho |R_D|$, respectively.

  With these definitions $\phi_R = X/Y$.
  Since $X$ and $Y$ are independent it holds that $E[\phi_R] = E[X]/E[Y] = r$.
  Our goal is to bound the probability that $\phi_R$ deviates from its expected value by bounding the probability of $X$ and $Y$ deviating from their expected values.
  Applying the Chernoff bound (see, e.g., \cite{mitzenmacher2005probability}) to $X$ and $Y$ yields that
  \begin{align*}
    \Pr[X > (1+\beta)\rho |R_A|]  &
    < \left(\frac{\exp(\beta)}{(1+\beta)^{1+\beta}}\right)^{\rho |R_A|}\,\text{, and} \\
    \Pr[Y < (1-\gamma)\rho |R_D|] &
    < \left(\frac{\exp(-\gamma)}{(1-\gamma)^{1-\gamma}}\right)^{\rho |R_D|}\,
  \end{align*}
  for any $\beta > 0$ and $0 < \gamma < 1$.
  Let $\event X(\beta)$ and $\event Y(\gamma)$ denote the two events from above, i.e., that $X$ resp.\ $Y$ deviates from the corresponding expected value by $(1+\beta)$ and $(1-\gamma)$.

  Let $\event Z$ denote the event that $\phi_R > (1+2\alpha)r$, i.e., the event from the statement, and consider positive values $\beta$ and $\gamma$ such that $\beta + \gamma = 2\alpha$.
  If neither $\event X(\beta)$ nor $\event Y(\gamma)$ occurs, then also $\event Z$ does not occur.
  By applying the union bound we obtain
  \[
  \Pr[\event Z]
  \leq \Pr[\event X(\beta) \lor \event Y(\gamma)]
  \leq \Pr[\event X(\beta)] + \Pr[\event Y(\gamma)]\,.
  \]
  We bound the above by applying the previously obtained Chernoff bounds for $\event X(\beta)$ and $\event Y(\gamma)$.
  Doing so yields
  \[
  \Pr[\event Z]
  < \left(\frac{\exp(\beta)}{(1+\beta)^{1+\beta}}\right)^{\rho |R_A|}
  + \left(\frac{\exp(-\gamma)}{(1-\gamma)^{1-\gamma}}\right)^{\rho |R_D|}\,.
  \]
  This resulting sum is minimized if $\beta = \gamma$, i.e., $\alpha = 2\beta/(1-\beta)$.
  By observing that $|R_A| = nr/(1+r)$ and $|R_D| = n/(1+r)$ the proof is completed.
\end{proof}

\Cref{lemma:resource-probability} bounds the impact of resource churn.
Our next goal is to do the same for membership churn.
To that end, similar to the discussion above, we characterize the membership churn in the steady state by the constant $\sigma = p_{pr}/(p_{pr} + p_{pf})$.

\begin{lemma}
  \label{lemma:membership-probability}
  Let $\phi_I$ be the random variable representing the ratio of online voters for $A$ to online voters for $D$.
  In the steady state and for $\alpha \in (0,1/2)$ it holds that
  \begin{alignat*}{2}
    \Pr\left[\phi_I \geq \left(1+\frac{2\alpha}{1-\alpha}\right) s\right]
    & <
    &
    & \left(\frac{\exp(\alpha)}{(1+\alpha)^{1+\alpha}}\right)^{\sigma ns/(1+s)}
    \\
    &
    & +
    &
    \left(\frac{\exp(-\alpha)}{(1-\alpha)^{1-\alpha}}\right)^{\sigma n / (1+s)}\,,
  \end{alignat*}
  where $n$ is the cardinality of $I$, and $s$ is the ratio of $A$'s peers to $D$'s peers in $P$.
\end{lemma}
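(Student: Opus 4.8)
The plan is to run exactly the argument of \cref{lemma:resource-probability}, replacing \emph{resources} by \emph{voting peers} and the resource-churn parameter $\rho$ by the membership-churn parameter $\sigma$. Concretely, I would let $P_A \,\dot\cup\, P_D$ denote the partition of the set of voting peers into those controlled by the attacker $A$ and those controlled by $D$, chosen so that $|P_A|/|P_D| = s$. For each voting peer I introduce a $0/1$ random variable indicating whether that peer is currently online; by the two-state Markov-chain model its expectation in the steady state is $\sigma$. Let $X$ be the sum of these indicators over $P_A$ and $Y$ the sum over $P_D$, so that $E[X] = \sigma |P_A|$ and $E[Y] = \sigma |P_D|$, and observe that $\phi_I = X/Y$ with $X$ and $Y$ independent, whence $E[\phi_I] = E[X]/E[Y] = s$.

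Next I would apply the Chernoff bound to the upper tail of $X$ and the lower tail of $Y$, obtaining $\Pr[X > (1+\beta)\sigma|P_A|] < \bigl(\exp(\beta)/(1+\beta)^{1+\beta}\bigr)^{\sigma|P_A|}$ for $\beta > 0$ and $\Pr[Y < (1-\gamma)\sigma|P_D|] < \bigl(\exp(-\gamma)/(1-\gamma)^{1-\gamma}\bigr)^{\sigma|P_D|}$ for $0 < \gamma < 1$. Picking $\beta,\gamma > 0$ with $\beta + \gamma = 2\alpha$, the event $\phi_I > (1+2\alpha)s$ can occur only if at least one of these two tail events occurs, so a union bound bounds $\Pr[\phi_I > (1+2\alpha)s]$ by the sum of the two Chernoff expressions. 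That sum is minimized at $\beta = \gamma$ (the same reparametrization that produces the factor $1 + 2\alpha/(1-\alpha)$ in the statement), and substituting $|P_A| = ns/(1+s)$ and $|P_D| = n/(1+s)$ yields precisely the claimed bound.

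Because every step is a transcription of the already-established \cref{lemma:resource-probability}, I expect no real obstacle. The one point worth stating explicitly is the independence input: the online/offline states of distinct voting peers are independent in the steady state (the analogue of the independence assumption for unit-resources), which is what licenses treating $X$ and $Y$ as sums of independent indicators and as mutually independent. I would also take a moment to pin down that $n$ plays the role of $|R|$ in \cref{lemma:resource-probability}, i.e.\ it counts voting peers and the factor $\sigma$ then accounts for the expected online fraction, so that the exponents $\sigma n s/(1+s)$ and $\sigma n/(1+s)$ come out as stated.
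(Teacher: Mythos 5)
Your proposal is correct and is exactly what the paper intends: the paper omits this proof, stating only that it "can be established using the same techniques as in the proof of \cref{lemma:resource-probability}," and your transcription (partition into $P_A \dot\cup P_D$, independent online indicators with steady-state mean $\sigma$, Chernoff bounds on the two tails, union bound, optimization at $\beta=\gamma$) is that argument. Your explicit reading of $n$ as the total count of voting peers (the analogue of $|R|$), with $\sigma$ supplying the online fraction, is the consistent interpretation of the lemma's exponents and is worth the clarification you give.
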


The above lemma can be established using the same techniques as in the proof of \cref{lemma:resource-probability}.
We therefore omit the proof here.
Note that the parameter $s$ in \cref{lemma:membership-probability} is directly affected by the outcome of the block mining process.
Before establishing our main theorem we thus derive bounds on the miner's luck of the attacker in the following lemma.

\begin{lemma}
  \label{lemma:block-probability}
  Let $\phi_B$ be the random variable representing the ratio of $A$'s blocks to $D$'s blocks in the blockchain.
  In the steady state and for $\alpha > 0$ it holds that
  \[
  \Pr[\phi_B \geq (1+\alpha) t]
  \leq \left(\frac{\exp(\alpha)}{(1+\alpha)^{1+\alpha}}\right)^{\ell t}
  \]
  where $\ell$ is the current length of the blockchain, and $t$ is the fraction of $A$'s resources in $R$.
\end{lemma}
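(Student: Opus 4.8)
The plan is to reduce the statement to a standard multiplicative Chernoff bound on the number of blocks in the committed blockchain that were mined by $A$. Write $B_A$ for the number of the $\ell$ blocks currently in $C$ whose contained identity is controlled by $A$, and $B_D = \ell - B_A$ for the rest, so that $\phi_B = B_A/B_D$. The first and main step is to argue that, conditioned on the steady-state online configuration, $B_A$ is a sum of $\ell$ independent $0/1$ random variables each with success probability $t$. The justification uses the structure of the PoW mechanism: by property~\ref{item:pow-work} the time for a single resource to find a PoW is $\exp(1/d)$-distributed, and the difficulty is auto-adjusted so that the aggregate expected inter-block time equals the constant $\tau$; since the exponential distribution is memoryless, immediately after each block is committed the race for the next block starts afresh, and -- because no entity has an unfair advantage in finding a PoW -- the finder of a given block is, by symmetry, a uniformly random one of the currently online resources, hence an $A$-resource with probability equal to $A$'s share of the online resources. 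In the steady state, and since resource failure/recovery is independent of ownership, that share equals $t = |R_A|/|R|$ (the online subset being an unbiased sample of $R$). Conditioning on the sequence of online-resource sets, the per-block finders are independent, so $E[B_A] = \ell t$ and $B_A$ is a sum of $\ell$ independent Bernoulli$(t)$ variables.

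Next I would apply the multiplicative Chernoff upper-tail bound exactly as in the proof of \cref{lemma:resource-probability}, obtaining $\Pr[B_A \ge (1+\alpha)\ell t] \le \left(\frac{\exp(\alpha)}{(1+\alpha)^{1+\alpha}}\right)^{\ell t}$ for every $\alpha > 0$. It then remains to translate this tail bound on the count into the claimed tail bound on the ratio $\phi_B$. Since $\phi_B = B_A/(\ell - B_A)$ is a monotone increasing function of $B_A$, the event $\{\phi_B \ge (1+\alpha)t\}$ is an event of the form $\{B_A \ge c\}$; the cleanest form of this step reads $\phi_B$ as the fraction $B_A/\ell$ of the chain held by $A$, for which $\Pr[\phi_B \ge (1+\alpha)t] = \Pr[B_A \ge (1+\alpha)\ell t]$ and the displayed bound is immediate, with the general ratio version following by the same monotonicity about the mean $\ell t$.

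I expect the mining-race modelling, not the Chernoff estimate, to be the real obstacle. One must cleanly separate the two distinct sources of randomness: fluctuations of the \emph{online-resource configuration} -- which are exactly what \cref{lemma:resource-probability} controls, and which here are absorbed by the steady-state assumption so that $A$'s online share may be treated as the fixed value $t$ -- versus the \emph{mining luck} given that configuration, which is what this lemma isolates. Getting the conditioning right is precisely what turns $B_A$ into an honest sum of independent Bernoulli trials; doing it carelessly would either double-count the resource-churn deviation (already handled separately and recombined in \cref{theorem:main-theorem}) or introduce spurious dependence between blocks. A secondary subtlety is that $B_A$ and $B_D$ count only the $\ell$ blocks of the committed chain and not orphaned or forked blocks -- legitimate because the CA fixes a single committed prefix -- so the argument should be phrased from the outset about that prefix and about the block-finding events that actually produced it.
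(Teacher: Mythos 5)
Your proposal takes essentially the same route as the paper's proof: one indicator variable per block for whether $A$ found it, the observation that the finder of each block is uniform over the online resources so that $E[B_A]=\ell t$ in the steady state, and a multiplicative Chernoff bound on the sum --- your extra care about memorylessness, conditioning on the online configuration, and restricting to the committed prefix only makes explicit what the paper leaves implicit. The one step where you are, if anything, more careful than the paper is the count-to-ratio translation: the paper argues ``since $\ell \phi_B \geq X$'' the tail of $\phi_B$ inherits the bound on $X$, which as stated points the inequality the wrong way for an upper bound on $\Pr[\phi_B \geq (1+\alpha)t]$ when $\phi_B = X/(\ell-X)$, and your reading of $\phi_B$ as the fraction $B_A/\ell$ is precisely the interpretation under which the displayed bound follows immediately.
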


\begin{proof}
  Let $X_i$ be the 0/1 random variable indicating whether the attacker found block $i$, and let $X$ denote its sum.
  It holds that $E[X] = \ell t$, since the resource that found block $i$ is drawn uniformly at random from the online resources, and in the steady state a $t$-fraction of those belongs to $A$.
  By the Chernoff bound,
  \[
  \Pr[X \geq (1+\alpha)\ell t]
  \leq \left(\frac{\exp(\alpha)}{(1+\alpha)^{1+\alpha}}\right)^{\ell t}\,.
  \]

  Since $\ell \phi_B \geq X$, the probability of the event $\ell \phi_B \geq (1+\alpha)\ell t$ is upper bounded by the same term.
  Dividing by $\ell$ concludes the proof.
\end{proof}

Note that the expected value of $\phi_B$ is not $t$---it rather depends on the resource distribution between $A$ and $D$.
Suppose that $E[\phi_B] = u$, and set $\alpha = (u\alpha' - t + u)/t$ for some $\alpha' > 0$.
Since $\alpha' > 0$ implies $\alpha > 0$, we may apply \cref{lemma:block-probability} to obtain the following technical corollary, which is the last building block for our proof of \cref{theorem:main-theorem}.

\begin{corollary}
  \label{corr:block-probability}
  Let $\phi_B$ be the random variable representing the ratio of $A$'s blocks to $D$'s blocks in the blockchain.
  In the steady state and for $\alpha' > 0$ it holds that
  \[
  \Pr[\phi_B \geq (1+\alpha') E[\phi_B]]
  \leq \left(\frac{\exp(\alpha)}{(1+\alpha)^{1+\alpha}}\right)^{\ell t}
  \]
  where $\ell$ is the current length of the blockchain, $t$ is the fraction of $A$'s resources in $R$, and $\alpha = (E[\phi_B]\alpha' - t + E[\phi_B])/t$.
\end{corollary}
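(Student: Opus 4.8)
The plan is to treat \cref{corr:block-probability} as a mere re-parametrization of \cref{lemma:block-probability}: the lemma states the tail bound as a multiplicative deviation above $t$, whereas the corollary wants it phrased as a deviation above the \emph{actual} mean $E[\phi_B]$ (which, as already noted, exceeds $t$ in general). Concretely the proof reduces to two small checks. First, one must verify that the prescribed quantity $\alpha = (E[\phi_B]\alpha' - t + E[\phi_B])/t$ is strictly positive, so that \cref{lemma:block-probability} is applicable. Second, one must verify the algebraic identity $(1+\alpha)\,t = (1+\alpha')\,E[\phi_B]$, which makes the two tail events literally the same set; the bound then transfers verbatim.

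For the positivity check, recall from the proof of \cref{lemma:block-probability} that the number $X$ of blocks mined by $A$ has $E[X] = \ell t$ and that $\phi_B = X/(\ell - X)$. Since $x \mapsto x/(\ell - x)$ is convex on $[0,\ell)$, Jensen's inequality yields $E[\phi_B] \geq E[X]/(\ell - E[X]) = t/(1-t) \geq t$. Substituting $E[\phi_B] \geq t$ into the definition of $\alpha$ and using $\alpha' > 0$ gives $\alpha \geq \alpha' > 0$, as needed. Intuitively: an entity holding a $t$-fraction of the resources can never be expected to be \emph{under}-represented in the block ratio, so $E[\phi_B] \ge t$; the Jensen step just makes this precise.

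For the identity, expand directly: $(1+\alpha)\,t = t + \alpha t = t + \bigl(E[\phi_B]\alpha' - t + E[\phi_B]\bigr) = (1+\alpha')\,E[\phi_B]$. Hence $\{\phi_B \geq (1+\alpha')\,E[\phi_B]\} = \{\phi_B \geq (1+\alpha)\,t\}$, and \cref{lemma:block-probability} applied with this $\alpha$ bounds the probability of the latter event by $\bigl(\exp(\alpha)/(1+\alpha)^{1+\alpha}\bigr)^{\ell t}$, which is exactly the claimed expression. I do not expect a genuine obstacle here; the only point demanding any care is the positivity of $\alpha$, i.e.\ the inequality $E[\phi_B] \geq t$, and that is handled by the one-line convexity argument above.
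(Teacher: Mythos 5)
Your proposal is correct and follows essentially the same route as the paper, which derives the corollary as an immediate re-parametrization of \cref{lemma:block-probability} via the substitution $\alpha = (E[\phi_B]\alpha' - t + E[\phi_B])/t$, so that $(1+\alpha)t = (1+\alpha')E[\phi_B]$ and the tail events coincide. The only difference is that you actually justify the positivity of $\alpha$ by proving $E[\phi_B] \geq t$ with a Jensen/convexity argument, whereas the paper simply asserts that $\alpha' > 0$ implies $\alpha > 0$; your added detail is a correct and welcome filling of that gap.
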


\subsection{Establishing \cref{theorem:main-theorem}}
\label{sec:safety-liveness-conclusion}

Let $\epsilon < 1/2$ be a positive constant.
The goal is to show that if $\phi_R < 1/2 - \epsilon$, then with high probability%
 the Chain Agreement is in a secure state.
To that end, consider the complementary event $\event T$ that the CA reaches an insecure state.
We establish the claim by showing that $\event T$ occurs with probability at most $\exp(-\Omega(\min(|R|,|I|,\ell)))$, where $R, I$, and $\ell$ are as above.

Let $\alpha, \beta, \gamma$ be positive constants with $\alpha + \beta + \gamma = \epsilon$.
We would like to use \cref{lemma:resource-probability}, \ref{lemma:membership-probability}, and \cref{corr:block-probability} to obtain the result.
To apply those three we perform a worst case analysis:
Consider the event $\event{U}$ that after reaching the steady state, $\phi_R, \phi_B$, or $\phi_I$ deviate from their expected value by more than $\alpha, \beta$, or $\gamma$, respectively.
Note that $\event U$ occurring is necessary, but not sufficient, for $\event T$ to occur.

Event $\event U$ corresponds to the occurrence of at least one of the events bounded in \cref{lemma:resource-probability}, \ref{lemma:membership-probability}, and \cref{corr:block-probability}.
Thus, applying the union bound to $\event U$ yields
\begin{alignat*}{2}
  \Pr[\event T] \leq \Pr[\event U]
  & \leq
  &
  & \Pr\left[\phi_R \geq \left(1+\alpha\right) E[\phi_R]\right] \\
  & & + & \Pr\left[\phi_B \geq \left(1+\beta\right) E[\phi_B]\right] \\
  & & + & \Pr\left[\phi_I \geq \left(1+\gamma\right) E[\phi_I]\right]\,.
\end{alignat*}

The statements of the two lemmas and the corollary can now be used to bound the three corresponding terms.
This concludes our proof of \cref{theorem:main-theorem}.\hfill\IEEEQED


\subsection{Reaching the Steady State}
\label{sec:reach-stable-state}
The security of the system hinges on it starting in a steady state, i.e., that there are a sufficient number of resources, voting identities and online peers.
For example should no identity have been promoted yet, then the first block finder controls all identities in the system, trivially subverting the system.
A bootstrapping period is used to ensure a large enough initial number of resources and voting identities set, resulting in good bounds on the failure probability.
In order to reach a steady state it is necessary to bootstrap the system in a controlled way.
Bootstrapping consists of determining a genesis block, an initial set of voting identities and an initial set of online~identities.

\PeerCensus can be bootstrapped by retrofitting the Bitcoin blockchain, providing the initial resources, blocks (voting identities) and peers.
Every block in Bitcoin contains a \emph{reward-transaction}, transferring a fixed amount of newly minted Bitcoins to the block finder.
In order to receive the Bitcoins, the block finder has to include a Bitcoin addresses in the transaction.
This enables us to derive the new voting identity from the block by extracting the Bitcoin address from the reward transaction.

\begin{figure}[t]
  \centering
  \includegraphics[width=0.5\textwidth, trim = 8mm 0mm 12mm 6mm, clip]{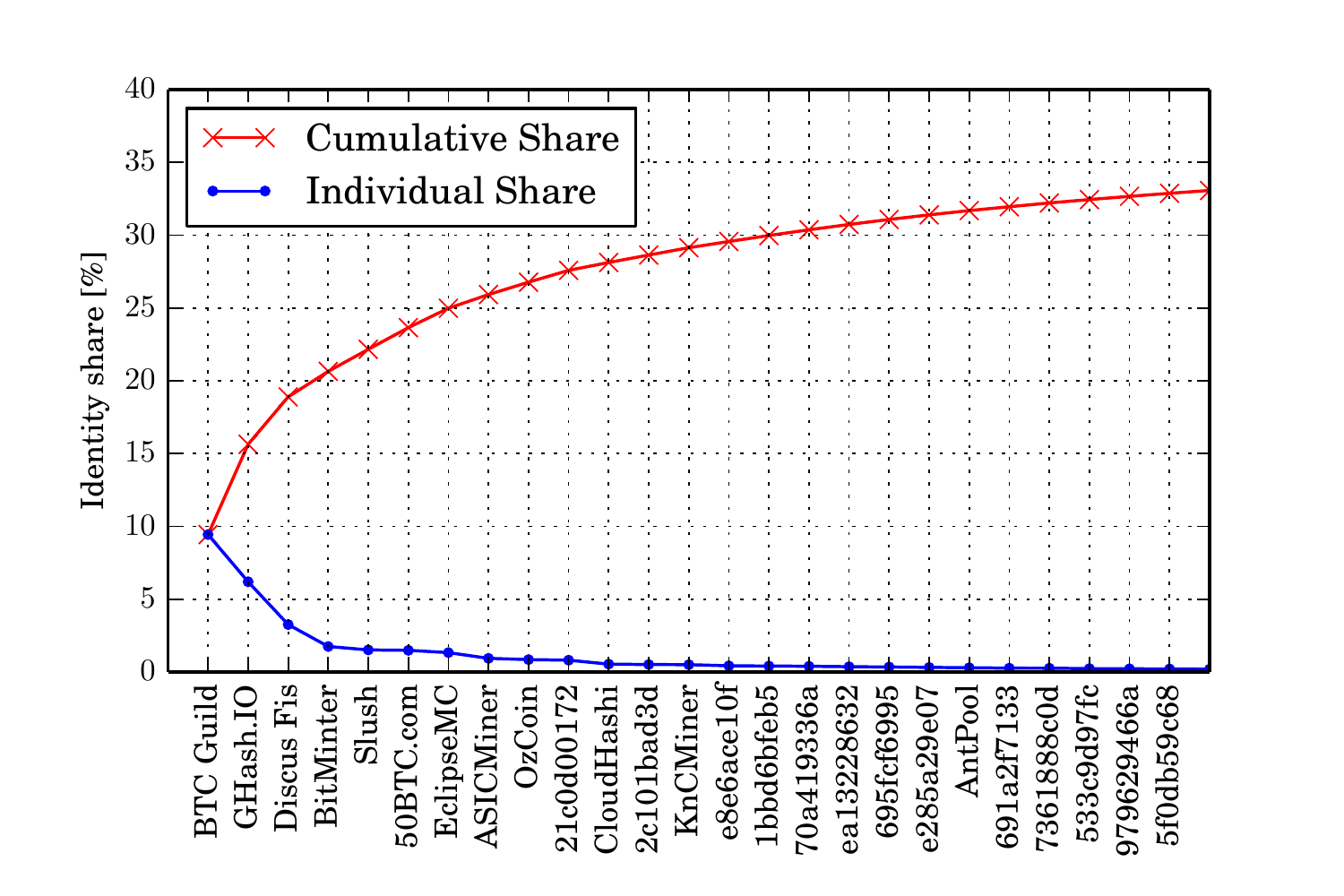}
  \caption{Bitcoin block finder distribution as of blockchain height 333,000 for the 25 most prominent mining pools.}
  \label{fig:block-finders}
\end{figure}

To migrate from Bitcoin to \PeerCensus a migration time in the form of a blockchain length $l_m$ is determined in advance.
Garay et al.~\cite{garay2014backbone} showed that with high probability peers agree on a common prefix, with distance $k$ from the current blockchain head and that the blockchain of length $j$ is a representative sample of online peers with high probability.
Upon receiving a valid block for blockchain length $l_m$, peers extract the identities from blocks $[0, l_m-k]$.
The Bitcoin genesis block is also the \PeerCensus genesis block.
The initial set of online identities is then assumed to consist of the last $j$ identities, i.e., the identities included in blocks $[l_m-k-j, l_m-k]$.
The parameter $j$ should be chosen small enough so that $\ceil*{2j/3}+1$ identities are online to guarantee liveness, but large enough to ensure diversity in the entities.
Once the set of voting and online voting identities are determined, peers start executing the \PeerCensus protocol.
The peers then incrementally commit blocks at heights $[l_m-k, l_m]$.

The migration requires that in Bitcoin's current blockchain there is no entity that has mined a sufficient number of blocks to take control of the system.
Fortunately, many mining pools include identifying hints in blocks, e.g., reusing the same address or including a text banner, so that the blocks can be attributed to the pool.
This allows us to determine the block finder of a large percentage of blocks found so far in the blockchain.
\Cref{fig:block-finders} shows the current shares of blocks found by mining pools and therefore their share of identities in \PeerCensus.
Even if the largest 28 pools were to collaborate they would not reach a sufficient share of blocks to take control of the system.
Furthermore, with $j\geq 10,000$ there is no single entity that controls more than $25\%$ of identities, securing the migration itself.

\subsection{Real World Guarantees}
\label{sec:real-world-guar}

The previous subsections established that with high probability the system does not fail, for increasing number of resources and identities.
In this section we give an example of the guarantees that are to be expected in real world instances of the \PeerCensus system.
In order to gauge the probability of a failure of the system we need to estimate some parameters used in the analysis.

For the resources we need to determine a maximum ratio of resources an attacker is allowed to control $25\%$ which results in a security margin of $\epsilon=1/2-1/3$.
Notice that this is equivalent to the 13 largest mining pools colluding to subvert the system according to \Cref{fig:block-finders}.
The number of resources is estimated as $1,000,000$, which at the current computational power in the Bitcoin network of $274,000,000 GH/s$ (Gigahashes) would mean that a unit resource has $274 GH/s$, which matches the currently available ASIC mining hardware.
The number of blocks in the system is estimated as $350,000$ blocks, matching the Bitcoin blockchain length.
The number of peers that are online in expectation is estimated at $25,000$ peers.
Furthermore we adopt a conservative mean time between failures of 99 days and a mean time to recovery of 1 day for resources and peers, resulting in $\rho=\sigma=0.99$.
Applying \cref{theorem:main-theorem} using these parameters yields the following upper bound on the failure probability of
\[
\Pr[\text{\PeerCensus is in a secure state}] \geq 1-4.26\cdot 10^{-15}
\]
in one time interval. Notice that this results from subdividing the security margin $\epsilon$ as $2\alpha_R=14\%\epsilon$, $\alpha_M=11\%\epsilon$ and $2\alpha_I=75\%\epsilon$.
If the system proceeds in discrete time intervals of 1 second, then the system therefore is expected to fail fewer than once every 7 million years.

\section{Discoin}
In the following we present Discoin, a crypto\-currency, as an exemplary application built on \PeerCensus.
Discoin tracks the balances of \emph{accounts}, denominated in \emph{coins}.
An account $a$ is associated with a public-/private-keypair $(p_a,s_a)$.
The public key $p_a$ is used to identify the account, while the private key $s_a$ is used to authenticate messages.
%

\SetKwFunction{process}{transaction}

The shared state in Discoin consists of account balances~$\mathcal{B}$.
In order to transfer coins between accounts we define a \emph{transaction} $tx=\langle a,b,v\rangle_\sigma$.
A transactions describes a transfer of $v$ coins from source account $a$ to destination account $b$ and includes signature $\sigma$ by the private key of $a$ to authorize the transfer.
A transaction is \emph{valid} if the source account's balance $\mathcal{B}[a] \geq v$, the signature $\sigma$ correctly signs $\langle a,b,v \rangle$ and matches the public key of $a$.

Discoin has a single operation $\process(tx)$ which, if committed, applies the transaction to the account balances.
Upon committing a $\process(\langle a,b,v\rangle_\sigma)$ operation the value is subtracted from the source account's balance and added to the destination account.
\begin{figure}[t]
  \begin{algo}
    \Specification{Discoin Transaction processing}{
      \State{
        $\mathcal{B}$ \Comment*{\makebox[5.5cm]{Account balances\hfill}}
      }

      \Validate{$\process(\langle a,b,v\rangle_\sigma)$}{
        \eIf{$\sigma$ is valid signature by $s_a$ and $\mathcal{B}[a] \geq v$}{\Return valid}{\Return invalid}
      }

      \OnCommit{$\process(\langle a,b,v\rangle_\sigma)$}{
        $\mathcal{B}[a] \leftarrow \mathcal{B}[a] - v$ \;
        $\mathcal{B}[b] \leftarrow \mathcal{B}[b] + v$ \;
      }
    }
  \end{algo}
  \caption{Discoin protocol}
\label{fig:discoin-protocol}
\end{figure}
Finally, Discoin distributes a reward of $r$ newly generated coins each time a block is found.
The $r$ coins are distributed in equal parts to each identity $i \in I$.
This reward is triggered by the timestamp change and does not necessitate a new transaction.
By using PBFT we are guaranteed to process the transactions in the same order.
The peers agree on the validity of individual transactions and on the current balance of each account.

Compared to Bitcoin, Discoin features a much leaner and simpler protocol.
Unlike Bitcoin which tracks transaction outputs, we explicitly track account balances which results in a smaller shared state and a more intuitive concept of account balances.
Committing a transaction is independent from the block generation and, more importantly, once transactions are committed they stay committed.
By distributing rewards among all participants instead of just the block finder, Discoin continuously incentivizes peers to participate in the network.
This contrasts Bitcoin's all-or-nothing rewards, which incentivize the creation of mining pools which pool resources and distribute the reward.
Mining pools are seen as single points of failure in the Bitcoin ecosystem~\cite{eyal2013majority,miller2014nonoutsourceable,rosenfeld2011analysis,rosenfeld2012analysis}.

As with the bootstrapping of \PeerCensus, the accounts from Bitcoin can be migrated to Discoin.
Once \PeerCensus is bootstrapped, Discoin can be bootstrapped by computing the account address balances up to Bitcoin's blockchain height $l_m$.
A snapshot of the balances is then committed before proceeding with the Discoin protocol and committing new transactions.

\section{Related Work}
\label{sec:related-work}

The study of byzantine agreement protocols was initiated by the seminal works by Lamport et al.~\cite{lamport1982byzantine,lamport1980reaching}, establishing tight feasibility results.
\PeerCensus and Discoin rely on byzantine agreement protocols that later improved message complexity, e.g., PBFT~\cite{castro1999practical}, Zyzzyva~\cite{kotla2007zyzzyva} and SGMP~\cite{reiter1996securegroup}.

Bitcoin~\cite{nakamoto2008bitcoin} is the latest, and most successful, in a long series of attempts to create a decentralized digital currency initiated by DigiCash~\cite{chaum1983blind} and ECash~\cite{chaum1990untraceable} by David Chaum.
Recent work by Garay et al.~\cite{garay2014backbone} and Miller et al.~\cite{miller2014anonymous} has shown that, with high probability, the peers participating in the Bitcoin network eventually agree on a transaction history.
Reaching consistency however is a slow process as blocks are counted as individual votes for the validity of a transaction and confirmations are never final.
%
%
Committing blocks in the CA resolves blockchain forks~\cite{decker2013information} early, rather than deferring the resolution to a later time, shown by Garay et al.~\cite{garay2014backbone} to be inefficient.

Today, a multitude of altcoins, i.e., alternative cryptocurrencies~\cite{miers2013zerocoin,rosenfeld2012colored} and so called Bitcoin 2.0 projects~\cite{butterin2014next,clark2012commitcoin,miller2014permacoin,willet2012mastercoin}, are being used, each one using their own blockchain.
This splits available resources and mining efforts, weakening the individual blockchains.
Back et al.~\cite{back2014sidechains} proposed two-way pegged sidechains as a way to allow altcoins to be pegged to Bitcoin and to trade among altcoins, however each altcoin still has the burden of securing their own system via a blockchain.

Rosenfeld~\cite{rosenfeld2011analysis} analyses the difficulty of fairly distributing rewards among mining pool participants.
Pools have become powerful entities often acting selfishly~\cite{babaioff2012bitcoin}.
Eyal and Sirer~\cite{eyal2013majority} show that a mining pool may increase its share by not publishing blocks immediately.
Miller~\cite{miller2014nonoutsourceable} propose an alternative proof of work mechanism that would not allow pools to form.

Schwartz et al.~\cite{schwartz2014rippleconsens} describe how consensus in Ripple is achieved by unique node lists assumed not to collude.
Maintaining the node lists however requires manual configuration in order to avoid sybil attacks.

\PeerCensus solves problems arising from inconsistent state views, such as double-spendings~\cite{bamert2013snack,karame2012two}. It does not address problems like transaction malleability~\cite{decker2014malleability} and privacy issues, e.g.,~\cite{androulaki2012evaluating,reid2011anonymity}.


\hide{
\section{Conclusion}

Retrofitting Bitcoin to use \PeerCensus not only shortens the bootstrapping period by reusing blocks and existing infrastructure, but it also allows to maintain the momentum and the public acceptance Bitcoin has gathered over the years.
\PeerCensus is not yet another altcoin, it is a subsystem that unifies cryptocurrencies.
}
\bibliographystyle{plain}
\bibliography{references.bib}

\end{document}